\DeclareMathAlphabet{\mathpzc}{OT1}{pzc}{m}{it}
\definecolor{blue}{rgb}{0.0, 0.53, 0.74}
\newtheorem{theorem}{Theorem}
\newtheorem{definition}{Definition}
\newtheorem*{prop*}{Restricted Full Rank Property (RFRP)}
\newtheorem{corollary}{Corollary}
\newtheorem{remark}{Remark}
\newtheorem*{prf*}{Proof}
\def\Ce{\mathbb{C}}
\def\Ic{\mathcal{I}}
\def\Es{\mathscr{E}}
\DeclareMathOperator{\krank}{k-rank}
\DeclareMathOperator{\im}{im}
\DeclareFontFamily{U}{BOONDOX-calo}{\skewchar\font=45 }
\DeclareFontShape{U}{BOONDOX-calo}{m}{n}{
  <-> s*[1.05] BOONDOX-r-calo}{}
\DeclareFontShape{U}{BOONDOX-calo}{b}{n}{
  <-> s*[1.05] BOONDOX-b-calo}{}
\DeclareMathAlphabet{\mathcalboondox}{U}{BOONDOX-calo}{m}{n}
\SetMathAlphabet{\mathcalboondox}{bold}{U}{BOONDOX-calo}{b}{n}
\DeclareMathAlphabet{\mathbcalboondox}{U}{BOONDOX-calo}{b}{n}
\title{Cross-Channel Unlabeled Sensing over a \\ Union of Signal Subspaces}
\author{
\IEEEauthorblockN{
  Taulant Koka\IEEEauthorrefmark{1},
  Manolis C. Tsakiris\IEEEauthorrefmark{2},
  Benjamín Béjar Haro\IEEEauthorrefmark{3},
  Michael Muma\IEEEauthorrefmark{1}
}
\IEEEauthorblockA{\IEEEauthorrefmark{1}%
  Robust Data Science Group, Technische Universität Darmstadt, 
  Darmstadt, Germany\\
}
\IEEEauthorblockA{\IEEEauthorrefmark{2}%
  Academy of Mathematics and Systems Science, Chinese Academy of Sciences, 
  Beijing, China\\
}
\IEEEauthorblockA{\IEEEauthorrefmark{3}%
  Swiss Data Science Center, Paul Scherrer Institute, 
  Villigen, Switzerland\\
}
\thanks{The work of T. Koka and M. Muma has been funded by the ERC Starting Grant ScReeningData, Grant No. 101042407. { M. C. Tsakiris acknowledges the support of the National Key R\&D Program of China (2023YFA1009402) and of the CAS Project for Young Scientists in Basic Research (YSBR-034).}}

\thanks{{\color{red}© 2025 IEEE. Personal use of this material is permitted. Permission from IEEE must be obtained for all other uses, in any current or future media, including reprinting/republishing this material for advertising or promotional purposes, creating new collective works, for resale or redistribution to servers or lists, or reuse of any copyrighted component of this work in other works. Accepted to ICASSP 2025. DOI: \texttt{10.1109/ICASSP49660.2025.10888212}\par\vskip 1em}}}
\begin{document}

\maketitle

\begin{abstract}
Cross-channel unlabeled sensing addresses the problem of recovering a multi-channel signal from measurements that were shuffled across channels. This work expands the cross-channel unlabeled sensing framework to signals that lie in a union of subspaces. The extension allows for handling more complex signal structures and broadens the framework to tasks like compressed sensing. These mismatches between samples and channels often arise in applications such as whole-brain calcium imaging of freely moving organisms or multi-target tracking. We improve over previous models by deriving tighter bounds on the required number of samples for unique reconstruction, while supporting more general signal types. The approach is validated through an application in whole-brain calcium imaging, where organism movements disrupt sample-to-neuron mappings. This demonstrates the utility of our framework in real-world settings with imprecise sample-channel associations, achieving accurate signal reconstruction. 
\end{abstract}

\begin{IEEEkeywords}
Unlabeled Sensing, Sparsity, Sampling, Cross-Channel Unlabeled Sensing
\end{IEEEkeywords}
\section{Introduction}
\label{sec:intro}

Reconstructing signals without precise knowledge of sample locations, often referred to as \emph{unlabeled sensing} or \emph{shuffled linear regression}, has recently gained significant attention in signal processing research \cite{unnikrishnan_unlabeled_2018,dokmanic_permutations_2019,tsakiris_algebraic-geometric_2020,slawski_pseudo-likelihood_2021,tsakiris_homomorphic_2019,ashwin_pananjady_linear_2018,hsu_linear_2017,abid_stochastic_2018,peng_linear_2020,elhami_unlabeled_2017,peng_homomorphic_2021,xu_uncoupled_2019,abbasi_r-local_2021,yao_unlabeled_2021,wang_estimation_2020,slawski_linear_2019,abid_linear_2017,TSAKIRIS2023210,onaran2022shuffled,abbasi_r-local_2022,Li2022}. This task typically involves recovering the correct order of samples and associated coefficients when the sensing matrix is known.
A related problem explored in \cite{Koka2024} introduced \emph{cross-channel unlabeled sensing}, where shuffling occurs across multiple channels of a multichannel signal rather than within individual channels. In this framework, signals reside within the same subspace, and samples may be swapped between channels, adding structure to the reconstruction task. The paper presented two key results: it formalized the cross-channel unlabeled sensing problem with signals confined to a single subspace, providing conditions for unique reconstruction. It then extended the theory to a specific sparse recovery scenario in continuous time for signals modeled as low-pass filtered streams of weighted Diracs.
Building on these foundations, our work generalizes the framework by allowing each channel signal to reside in its own distinct subspace, thus accommodating more complex and diverse signal structures. This extension significantly broadens the applicability of cross-channel unlabeled sensing beyond shared subspaces, making it suitable for a wider range of problems, including sparse recovery tasks like compressed sensing on and off the grid \cite{Donoho2006,Golbabaee2012,candes-fernandez-2012,tang-etal-2013}. A key improvement of this work is the derivation of a tighter bound on the number of required samples for unique reconstruction compared to Theorem 2 in \cite{Koka2024}, while at the same time allowing for more general signal models.
We validate our approach through experiments in whole-brain calcium imaging, a neuroscience application where matching measurements to specific neurons can be challenging due to the movement of the organism. The application highlights the practical utility of our framework in managing real-world scenarios with imprecise associations between samples and channels. 

\textbf{Main Contributions:} (i) We generalize the cross-channel unlabeled sensing problem to multi-channel signals residing in a union of subspaces, as formalized in Theorem~\ref{thm:CCULS}. (ii) We demonstrate that compressed sensing is a special instance of our generalized framework, as shown in Corollary~\ref{corr:1} and Corollary~\ref{corr:2}. (iii) We showcase the applicability of our approach in a practical neuroscience context involving whole-brain calcium imaging, where neural signals are recovered from shuffled traces.

\textbf{Notation:} We represent sets with calligraphic letters, tensors with bold calligraphic letters, and vectors and matrices with bold letters. An $N$-dimensional vector is expressed as $\bm x = ( x_0,\: x_1,\: \ldots,\: x_{N-1})^\top$, and an $N \times N$ matrix is represented by $\bm X = (\bm x_0 \quad \bm x_1 \: \cdots \: \bm x_{N-1})$. 
For indexed elements ${\Omega_1, \ldots, \Omega_M}$, the notation $\{\Omega_m\}_{m=1}^M$ is used for brevity. For subspaces $\mathcal{V}$ and $\mathcal{W}$, their dimensions are written as $\dim \mathcal{V}$ and $\dim \mathcal{W}$, their intersection as $\mathcal{V} \cap \mathcal{W}$, and their union as $\mathcal{V} \cup \mathcal{W}$. We define as $\mathcal{V}+\mathcal{W}=\{\bm v + \bm w|\bm v\in\mathcal{V},\:\bm w \in \mathcal{W}\}$ the sum of subspaces. The kernel of a linear map $\varrho: \mathcal{V} \rightarrow \mathcal{W}$ is denoted by $\ker(\varrho)$, and its image is denoted by $\im(\varrho)$. The Kronecker product is denoted by the operator $\otimes$.
\section{Cross-Channel Unlabeled Sensing: extension to unions of subspaces}
 \label{sec:meth}

Let us consider a multi-channel signal $\bm x_m \in \Es_m$ for $m = 1, \dots, M$, where $M$ denotes the number of channels and each $\Es_m \subset \mathbb{C}^N$ is a $K_m$-dimensional linear subspace. Specifically, $\bm x_m$ can be expressed as $\bm x_m = \bm E_m \bm \beta_m$, where the columns of $\bm E_m \in \mathbb{C}^{N \times K_m}$ form a basis for $\Es_m$. Our focus is on signals $\bm y_m \in \mathbb{C}^N$, such that each row of the $N \times M$ matrix $\bm Y = (\bm y_1 \cdots \bm y_M)$ corresponds to the same entries as the rows of the $N \times M$ matrix $\bm X = (\bm x_1 \cdots \bm x_M)$, except for permutations across the channels, as illustrated in Fig.~\ref{fig:shuffled}. For $\bm E_1 =\dots=\bm E_M$, this setting coincides with the \emph{cross-channel unlabeled sensing} framework introduced in \cite{Koka2024}.  In that work, Definitions~\ref{def:shuffled_multi_channel}~and~\ref{dfn:generic} were established to formally define the concepts of shuffled multi-channel signals and the \hyperref[dfn:generic]{restricted full rank property (RFRP)}, which are fundamental to the theoretical foundation.

\begin{definition}[\textbf{Shuffled Multi-Channel Signal} \cite{Koka2024}]
\label{def:shuffled_multi_channel}
For every pair of indices $m,n\in\{1,\dots,M\}$, consider binary vectors $\boldsymbol{q}_{mn}\in{\{0,1\}}^N$ satisfying 

\begin{equation}
\label{eq:sum_q}
\sum_{m=1}^M\boldsymbol{q}_{mn}~=~\sum_{n=1}^M\boldsymbol{q}_{mn}=
    {\begin{pmatrix}
        1&1&\cdots&1
    \end{pmatrix}^\top}.
\end{equation}

We define $\bm y_m\in\mathbb{C}^N, \, m=1,\dots,M$, to be a shuffled multi-channel signal if it is related to the underlying multi-channel signal $\bm x_m~\in~\mathbb{C}^N, m=1,\dots,M$ by the following relation:

\begin{align}
\label{eq:permutation_model_msignals}
    \underbrace{\begin{pmatrix}
        \boldsymbol{y}_{1} \\
        \vdots \\
        \boldsymbol{y}_{M} \\
    \end{pmatrix}}_{\eqqcolon\bm y}
    =
    \underbrace{\begin{pmatrix}
        \boldsymbol{Q}_{11} &\cdots&\boldsymbol{Q}_{1M} \\
        \vdots & \ddots&\vdots \\
        \boldsymbol{Q}_{M1} &\cdots&\boldsymbol{Q}_{MM} \\
    \end{pmatrix}}_{\eqqcolon \bm \Pi }
    \underbrace{\begin{pmatrix}
        \boldsymbol{x}_{1} \\
        \vdots \\
        \boldsymbol{x}_{M} \\
    \end{pmatrix}}_{\eqqcolon\bm x}, 
\end{align}

where $\boldsymbol{Q}_{mn} \coloneqq \mathrm{diag}(\boldsymbol{q}_{mn})$. 
\end{definition}

\begin{figure}[h]
    \centering
    \begin{overpic}[width=0.3\textwidth,tics=5]{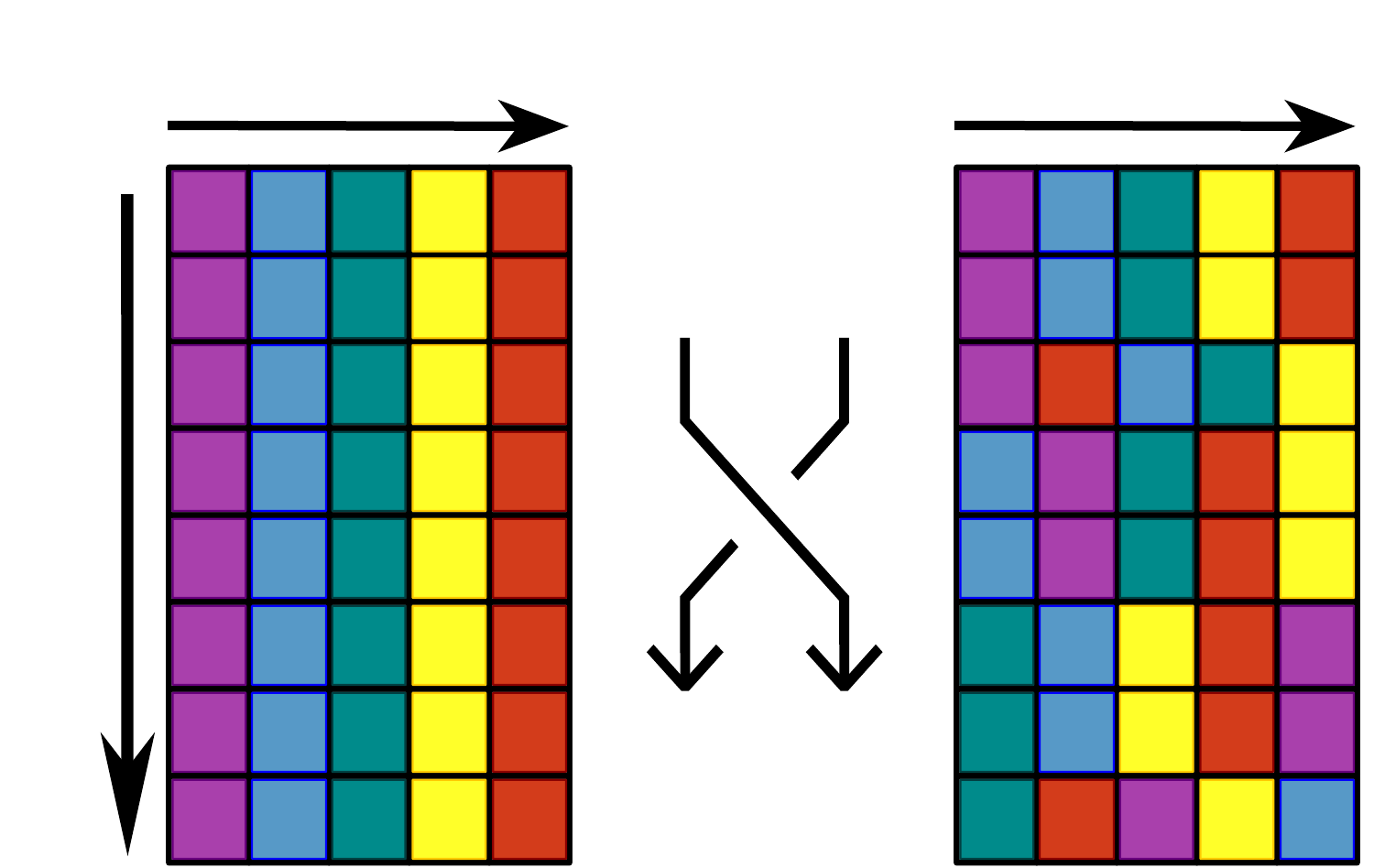}
    \put (24,57)  {$\bm x_m$}
    \put (81,57.3) {$\bm y_m$}
    \put (3,25) {$n$}
    \end{overpic}    
    \caption{If each channel of a multi-channel signal $\boldsymbol{x}_m$ is represented as a column in a data matrix, shuffling corresponds to independently permuting the entries within each row, resulting in the shuffled signals $\bm y_m$.}
    \label{fig:shuffled}
\end{figure}

{We give a slightly more general definition than that in \cite{Koka2024}:
{\begin{definition}[\textbf{$K$-Restricted Full Rank Property}]
\label{dfn:generic}
Let $K \le N$ and let $\Es \subset \Ce^N$ be a linear subspace of dimension $k\le K$. We say ``$\Es$ satisfies the \hyperref[dfn:generic]{$K$-restricted full rank property (RFRP)}" if $\dim \vartheta(\Es) =k$ for every coordinate projection $\vartheta: \Ce^N \rightarrow \Ce^N$ that preserves at least $K$ entries.
We say that ``a tall $N \times k$ matrix $\bm E$ satisfies the \hyperref[dfn:generic]{$K$-RFRP}", if its column space satisfies the \hyperref[dfn:generic]{$K$-RFRP}, i.e., if every $K \times k$ submatrix of $\bm E$ has rank $k$.\end{definition}}}

\begin{theorem}
\label{thm:CCULS}
{Consider $M$-channel signals $\bm x_m',\bm x_m''\in\Es_m,\:\Es_m\subset\mathbb{C}^N,\:m=1,\dots,M$, where $\bm \Es_m$ is a linear subspace for all $m$. Let $\{\bm Q_{m n}\}_{m,n=1}^M$ be binary diagonal matrices satisfying $\bm I=\sum_{m=1}^M\bm Q_{mn} = \sum_{n=1}^M\bm Q_{mn}$, and let $K=\underset{m,n}{\max}\{\dim(\Es_m+\Es_n)\}$. 
Suppose that i) the subspace $\Es_m + \Es_n$ satisfies the $K$-RFRP for every $m,n$, ii) $N \ge M K$ and iii) $\bm x_\kappa' \neq \bm x_\lambda'$ for every $\kappa \neq \lambda$.}
Then, the relation

\begin{align}
\begin{pmatrix}
        \boldsymbol{Q}_{11} & \dots&\boldsymbol{Q}_{1M} \\
        \vdots & \ddots&\vdots \\
        \boldsymbol{Q}_{M1} &\dots&\boldsymbol{Q}_{MM} \\
    \end{pmatrix}
    \begin{pmatrix}
    \bm x_1''\\
    \vdots\\
    \bm x_M''\\
\end{pmatrix} = 
\begin{pmatrix}
    \bm x_1'\\
    \vdots\\
    \bm x_M'\\
\end{pmatrix} \label{eq:M-channel-US}
\end{align}
implies $\bm x_m' = \bm x_{n_m}''$ for every $m$, with $\{n_m\}_{m=1}^M = \{m\}_{m=1}^M$. 
\end{theorem}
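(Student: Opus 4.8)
The plan is to decode the block identity~\eqref{eq:M-channel-US} into a family of rowwise permutation constraints, and then to glue these local permutations into a single global one using the $K$-RFRP on the pairwise sums $\Es_m+\Es_n$ together with a short counting argument.

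First I would unpack the structure of $\bm\Pi$. Since the $\bm Q_{mn}$ are binary and diagonal with $\sum_{n}\bm Q_{mn}=\sum_{m}\bm Q_{mn}=\bm I$, reading off the $i$-th diagonal entries shows that, for each row $i\in\{1,\dots,N\}$, the $0/1$ matrix $\big([\bm q_{mn}]_i\big)_{m,n=1}^M$ has all row and column sums equal to $1$ and is therefore a permutation matrix; I would call the associated permutation $\sigma_i\in S_M$. Expanding~\eqref{eq:M-channel-US} entrywise then gives $[\bm x_m']_i=[\bm x_{\sigma_i(m)}'']_i$ for every $i$ and every channel $m$: in row $i$, channel $m$ of $\bm x'$ is a copy of channel $\sigma_i(m)$ of $\bm x''$.

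Next, for a fixed channel $m$, I would partition the rows as $\{1,\dots,N\}=\bigsqcup_{n=1}^M R_n$ with $R_n:=\{\,i:\sigma_i(m)=n\,\}$, so that $\bm x_m'$ and $\bm x_n''$ agree on $R_n$. Since $\sum_n|R_n|=N\ge MK$, some block satisfies $|R_{n_m}|\ge N/M\ge K$. The vector $\bm x_m'-\bm x_{n_m}''$ lies in $\Es_m+\Es_{n_m}$, a subspace of dimension at most $K$ that satisfies the $K$-RFRP by hypothesis~(i); the RFRP says the coordinate projection keeping the (at least $K$) coordinates indexed by $R_{n_m}$ is injective on it, and since $\bm x_m'-\bm x_{n_m}''$ vanishes on $R_{n_m}$ this forces $\bm x_m'=\bm x_{n_m}''$. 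Carrying this out for every $m$ yields a map $m\mapsto n_m$ with $\bm x_m'=\bm x_{n_m}''$; hypothesis~(iii) makes the $\bm x_m'$ pairwise distinct, hence the $n_m$ pairwise distinct, so $m\mapsto n_m$ is a bijection of $\{1,\dots,M\}$, which is exactly the claim $\{n_m\}_{m=1}^M=\{m\}_{m=1}^M$ with $\bm x_m'=\bm x_{n_m}''$.

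I expect the only real obstacle to be this middle step, and in particular the observation that the vanishing difference $\bm x_m'-\bm x_{n_m}''$ generally belongs to no single $\Es_j$, so the full-rank property has to be invoked on the \emph{sum} $\Es_m+\Es_{n_m}$; this is precisely why $K$ is defined as $\max_{m,n}\dim(\Es_m+\Es_n)$ and why~(i) is imposed on those sums, and it is also where the tighter threshold $N\ge MK$ (versus the bound of~\cite{Koka2024}) enters. The remaining ingredients — the permutation-matrix reading of $\bm\Pi$ and the averaging bound $\max_n|R_n|\ge N/M$ — are routine.
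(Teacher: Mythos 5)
Your proposal is correct and follows essentially the same route as the paper: your per-row permutation matrices $\sigma_i$ are the paper's ``orthogonal resolution of the identity'' $\{\rho_{mn}\}_n$, your sets $R_n$ are the supports of the $\bm q_{mn}$ (so $|R_n|=r_{mn}$), the pigeonhole step $\max_n|R_n|\ge N/M\ge K$ is identical, and the $K$-RFRP on $\Es_m+\Es_{n_m}$ plus the distinctness hypothesis close the argument exactly as in the paper. No gaps.
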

\begin{proof}
Let $\rho_{mn}: \mathbb{C}^N \rightarrow \mathbb{C}^N$ be defined as the coordinate projection via multiplication with $\bm{Q}_{mn}$, and let $r_{mn}$ denote the number of nonzero elements in $\bm{Q}_{mn}$. The $m$th row of the matrix equation \eqref{eq:M-channel-US} can be expressed as:

\begin{align}
\bm{x}_m' = \rho_{m1}(\bm{x}_1'') + \cdots + \rho_{mM}(\bm{x}_M''). \label{eq:M-channel-US-ith-row}
\end{align}

Noting that, for a fixed $m$, the projections $\{\rho_{mn}\}_{n=1}^M$ form an \emph{orthogonal resolution of the identity} \cite{Roman}, we may follow the same approach as in the proof of the first theorem in \cite{Koka2024} to arrive at relations of the form:  
\[
\rho_{m n_m}(\bm{x}_m' - \bm{x}_{n_m}'') = 0,
\]
\noindent {There must be at least one $n_m \in \{1, \dots, M\}$ such that $r_{m n_m} \ge K$, since $N \ge M K$.}
The above implies that $\bm{x}_m' - \bm{x}_{n_m}'' \in (\mathscr{E}_m + \mathscr{E}_{n_m}) \cap \ker(\rho_{m n_m})$. Let $\bm{E}_{mn_m}$ be a basis for the subspace $\mathscr{E}_m + \mathscr{E}_{n_m}$. {There exists a unique $\bm{\beta}$ such that $\bm{x}_m' - \bm{x}_{n_m}''= \bm{E}_{mn_m}\bm{\beta}$.} Let $\Ic \subset \mathscr{N}$ be the indices of the coordinates preserved by $\rho_{m n_m}$, and let $\bm{E}_{{mn_m},\Ic}$ be the row submatrix of $\bm{E}_{mn_m}$ that includes only the rows in $\bm{E}_{mn_m}$ indexed by $\Ic$. Similarly, for any vector $\bm{\zeta} \in \mathbb{C}^N$, let $\bm{\zeta}_\Ic$ denote the vector composed of elements of $\bm{\zeta}$ corresponding to coordinates in $\Ic$. Therefore, we have: {$\bm{0} = \rho_{m n_m}(\bm{x}_m' - \bm{x}_{n_m}'') = (\bm{x}_m' - \bm{x}_{n_m}'')_\Ic = \bm{E}_{{mn_m},\Ic} \bm{\beta}$. Now, $\bm{E}_{{mn_m},\Ic}$ has at least as many rows as columns, because $\dim(\mathscr{E}_m + \mathscr{E}_{n_m}) \le K$ while $|\Ic|\ge K$. It must then be that $\bm{E}_{{mn_m},\Ic}$ has full column-rank, since $\mathscr{E}_m + \mathscr{E}_{n_m}$ satisfies the \hyperref[dfn:generic]{$K$-RFRP}.} This leads to $\bm{\beta} = \bm{0}$ and thus $\bm{x}_m' = \bm{x}_{n_m}''$. As in \cite{Koka2024}, if it so happens that there exist indices $\kappa \neq \lambda$ such that $n_{\kappa} = n_{\lambda}$, we would have: $\bm{x}_\kappa' = \bm{x}_{n_\kappa}'' = \bm{x}_{n_\lambda}'' = \bm{x}_\lambda'.$  {However, this contradicts our hypothesis that $\bm{x}_\kappa' \neq \bm{x}_\lambda' $ for $\kappa \neq \lambda$.}
\end{proof}

Theorem~\ref{thm:CCULS} establishes conditions on the dimensions and number of shuffled samples that allow for unique recovery
in the cross-channel unlabeled sensing problem, when the underlying signals lie in a union of subspaces, {with the sum of any two subspaces} satisfying the
\hyperref[dfn:generic]{RFRP}. As we will see in the following two corollaries, Theorem~\ref{thm:CCULS} encompasses signal models falling under the compressed sensing framework as a special case, thus covering a wide variety of signal processing applications.
\begin{figure*}[b]
\centering
\includegraphics[width=0.85\linewidth]{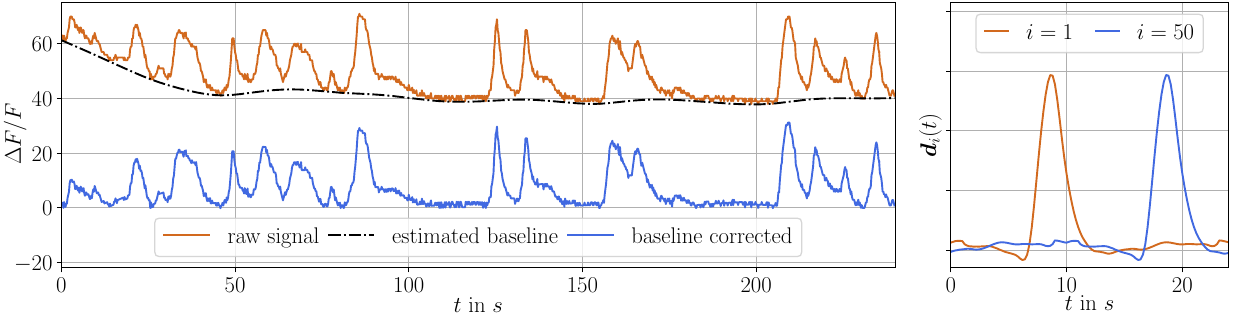}
    \caption{(Left) Baseline correction of the fluorescent traces $\Delta F/F$ (relative intensity) in Drosophila. (Right) Learned kernel using convolutional sparse coding.}
        \label{fig:atoms_dictionary}
\end{figure*}
\subsection{Cross-Channel Unlabeled Compressed Sensing}
As before, consider a multi-channel signal $\bm x_m$ for $m = 1, \dots, M$. This time, however, let $\bm x_m$ admit a $K_m$-sparse representation on an overcomplete dictionary $\bm D\in\mathbb{C}^{N\times p}$, i.e., $\bm x_m$ can be expressed as $\bm x_m = \bm D \bm \beta_m$, where $||\bm \beta_m||_0=K_m$. It is well-known that $\bm \beta_m$ is the unique solution to \[\underset{\bm \beta}{\min}||\bm \beta||_0 \text{ s. t. } \bm x_m = \bm D \bm \beta\] if $N\geq 2K_m$ and $\krank(\bm D)\geq2K_m$, where the $\krank(\bm D)$ is the largest number $\kappa$ such that any selection of $\kappa$ columns of $\bm D$ is linearly independent. To relate this signal model to Theorem~\ref{thm:CCULS}, we will rely on the following definition.
{ 
\begin{definition}[\bf{$K\times k$- and $K\times K$- Restricted Full Rank Properties}]
\label{dfn:krfrp}
Let $\mathbf{D}\in\mathbb{C}^{N\times p}$, where $p>N$ is allowed. We say "$\bm D$ satisfies the \hyperref[dfn:krfrp]{$K\times k$-restricted full rank property ($K\times k$-RFRP)}" if every $K\times k$ submatrix of $\bm D$ has rank $k$, $\forall k \le K$. We say $\bm D$ satisfies the $K\times K$-RFRP, if every $K \times K$ submatrix of $\bm D$ has rank $K$. Note $K\times k$-RFRP implies $K \times K$-RFRP.
\end{definition}}

\begin{remark}
    A random $N\times p$ matrix $\bm D$ will satisfy the \hyperref[dfn:krfrp]{$K\times k$-RFRP} for any $K\leq\min\{N,p\}$ with probability $1$, under any continuous probability distribution.
\end{remark}
 
\begin{corollary}
    \label{corr:1}
    Consider $M$ distinct signals $\{\bm x_m \in \Ce^N\}_{m=1}^M$ with $\bm x_m=\bm D\bm \beta_m,\:\bm\beta_m\in\mathbb{C}^p$, $\bm D\in\mathbb{C}^{N\times p}$. Denote by $\Ic_m$ the index set of nonzero coefficients in $\bm \beta_m$. Suppose that we observe the channels $\{\bm y_m\}_{m=1}^M$ of a shuffled M-channel signal with respect to the $\bm x_m$'s and that $||\sum_{m=1}^M\bm\beta_m||_0=|\bigcup_{m=1}^M\Ic_m|=K$. If $\bm D$ satisfies the {\hyperref[dfn:krfrp]{$2K\times 2K$-RFRP}} and the number of samples $N\geq MK$, then $\bm \beta_1,\dots,\bm\beta_M$ are uniquely determined up to a renaming of the channels.
    \end{corollary}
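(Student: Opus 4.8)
The plan is to reduce the statement to Theorem~\ref{thm:CCULS}. It suffices to show that any competing admissible configuration $(\{\bm \beta_m'\},\{\bm Q_{mn}'\})$ that reproduces the same observations $\{\bm y_m\}$ with the same sparsity budget $|\bigcup_m\mathcal I_m'|\le K$ yields, writing $\bm x_m':=\bm D\bm\beta_m'$, the same set $\{\bm\beta_m\}$ up to a permutation of channels. First I would isolate the two roles that the number $2K$ plays. From the $2K\times 2K$-RFRP one gets, on the one hand, that any $2K$ columns of $\bm D$ span a $2K$-dimensional space (using $N\ge MK\ge 2K$), i.e.\ $\krank(\bm D)\ge 2K$, which forces a signal $\bm D\bm u$ with $\|\bm u\|_0\le K$ to have a \emph{unique} $K$-sparse preimage (if $\bm D\bm u=\bm D\bm v$ with $\|\bm u\|_0,\|\bm v\|_0\le K$ then $\bm u-\bm v\in\ker\bm D$ has at most $2K$ nonzeros, hence vanishes); and, on the other hand, that every submatrix of $\bm D$ of order at most $K$ is nonsingular, so that for any index set $\mathcal J$ with $|\mathcal J|\le K$ the subspace $\operatorname{colspan}(\bm D_{\mathcal J})$ has dimension $|\mathcal J|$ and satisfies the $|\mathcal J|$-RFRP of Definition~\ref{dfn:generic}. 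For generic $\bm D$ both facts hold simultaneously (cf.\ the remark preceding the corollary), and in a full write-up I would phrase the two auxiliary claims so that precisely the minors actually used are the ones assumed nonsingular.

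The second step is to remove the ambiguity in the common support. Summing the block relation $\bm y_m=\sum_n\bm Q_{mn}\bm x_n$ over the channel index $m$ and using $\sum_m\bm Q_{mn}=\bm I$ (and likewise $\sum_m\bm Q_{mn}'=\bm I$) gives $\sum_m\bm x_m=\sum_m\bm y_m=\sum_m\bm x_m'$, hence $\bm D\bigl(\sum_m\bm\beta_m-\sum_m\bm\beta_m'\bigr)=\bm 0$. Both sums are $K$-sparse, so uniqueness of the $K$-sparse representation forces $\sum_m\bm\beta_m'=\sum_m\bm\beta_m$; comparing supports and using the non-cancellation hypothesis $\|\sum_m\bm\beta_m\|_0=|\bigcup_m\mathcal I_m|=:|\mathcal I|=K$ together with $|\bigcup_m\mathcal I_m'|\le K$ yields $\bigcup_m\mathcal I_m'=\mathcal I$. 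Consequently every $\bm x_m$ and every $\bm x_m'$ lies in the single $K$-dimensional subspace $\Es:=\operatorname{colspan}(\bm D_{\mathcal I})$, which by step one satisfies the $K$-RFRP.

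Now I would invoke Theorem~\ref{thm:CCULS} with $\Es_1=\dots=\Es_M=\Es$. Each matrix of the prescribed block form is a channel-wise permutation, so $\bm\Pi'':=\bm\Pi^{-1}\bm\Pi'$ is again of that form and satisfies $\bm\Pi''\,\bm x'=\bm x$; moreover $\max_{m,n}\dim(\Es_m+\Es_n)=\dim\Es=K$, so hypothesis (i) is the $K$-RFRP of $\Es$, (ii) is the assumption $N\ge MK$, and (iii) is the given distinctness of the $\bm x_m$. The theorem then produces a bijection $\sigma$ of $\{1,\dots,M\}$ with $\bm x_m=\bm x_{\sigma(m)}'$, i.e.\ $\bm D\bm\beta_m=\bm D\bm\beta_{\sigma(m)}'$ with both coefficient vectors $K$-sparse, and uniqueness of the $K$-sparse representation upgrades this to $\bm\beta_m=\bm\beta_{\sigma(m)}'$ for all $m$, which is the assertion.

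The step I expect to require the most care is the first one: making precise which submatrices of $\bm D$ must be full rank so that the RFRP of $\bm D$ genuinely transfers to the $K$-dimensional subspace $\Es$ fed into Theorem~\ref{thm:CCULS}, while at the same time securing $\krank(\bm D)\ge 2K$ for the sparse-recovery uniqueness; here one must also note that the preserved-coordinate set in the proof of Theorem~\ref{thm:CCULS} has size at least $K$, so that the relevant submatrices of $\bm D_{\mathcal I}$ are tall and their full column rank follows from the $K\times K$ minors being nonsingular. Everything else is bookkeeping about support sizes and a direct appeal to the already-proved theorem.
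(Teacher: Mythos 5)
Your proof follows essentially the same route as the paper's: sum the channels, use the $2K\times 2K$-RFRP (via $\krank(\bm D)\ge 2K$) to pin down the common support $\Ic$ uniquely, and then apply Theorem~\ref{thm:CCULS} with $\Es_1=\dots=\Es_M=\operatorname{colspan}(\bm D_{\Ic})$; your extra bookkeeping (the $\bm\Pi^{-1}\bm\Pi'$ reduction and the final upgrade from $\bm x_m=\bm x_{\sigma(m)}'$ to $\bm\beta_m=\bm\beta_{\sigma(m)}'$) just makes explicit what the paper leaves implicit. The one caveat you rightly flag — that the $2K\times 2K$-RFRP does not by itself force every $K\times K$ minor of $\bm D_{\Ic}$ to be nonsingular, which is what Theorem~\ref{thm:CCULS} actually consumes — is present in the paper's own proof as well, so your version is, if anything, slightly more careful on that point.
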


    \begin{proof}
    Since $\bm D$ satisfies the {\hyperref[dfn:krfrp]{$2K\times 2K$-RFRP}}, the vector $\bm\beta_\Sigma=\sum_{m=1}^M\bm\beta_m$, and consequently its index set of nonzero coefficients $\Ic=\bigcup_{m=1}^M\Ic_m$, are uniquely defined. This enables the construction of a sensing matrix $\bm E$ with column space $\Es$, derived by retaining the columns of $\bm D$ indexed by $\Ic$. Since $\bm x_m\in\Es$ for all $m$, and $\bm E$ satisfies the \hyperref[dfn:generic]{RFRP}, the result follows from Theorem~1 with $\Es_1 = \dots = \Es_m = \Es$.
    \end{proof}

\begin{corollary}
\label{corr:2}
 Consider $M$ distinct signals $\{\bm x_m \in \Ce^N\}_{m=1}^M$ with $\bm x_m=\bm D\bm \beta_m,\:\bm\beta_m\in\mathbb{C}^p$, $\bm D\in\mathbb{C}^{N\times p}$, $p > N$. Denote by $\Ic_m$ the index set of nonzero coefficients in $\bm \beta_m$. Suppose that we observe the channels $\{\bm y_m\}_{m=1}^M$ of a shuffled M-channel signal with respect to the $\bm x_m$'s and let $K = \underset{m,n}{\max}\{|\Ic_m\cup\Ic_n|\}$. If $\bm D$ satisfies the {\hyperref[dfn:krfrp]{$K\times k$-RFRP}} and the number of samples $N\geq MK$, then $\bm \beta_1,\dots,\bm\beta_M$ are uniquely determined up to a renaming of the channels.
\end{corollary}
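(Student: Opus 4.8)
The plan is to reduce Corollary~\ref{corr:2} to Theorem~\ref{thm:CCULS} by constructing, for each channel $m$, a subspace $\Es_m$ that captures exactly the support structure of $\bm\beta_m$. First I would define $\Es_m$ to be the column span of the $N\times|\Ic_m|$ submatrix $\bm D_{\Ic_m}$ of $\bm D$ obtained by keeping the columns indexed by $\Ic_m$; then clearly $\bm x_m = \bm D\bm\beta_m \in \Es_m$, and $\dim\Es_m = |\Ic_m|$ provided $\bm D_{\Ic_m}$ has full column rank, which follows from the $K\times k$-RFRP since $|\Ic_m| \le K$. The key observation is that $\Es_m + \Es_n$ is the column span of the submatrix $\bm D_{\Ic_m\cup\Ic_n}$, whose number of columns is $|\Ic_m\cup\Ic_n| \le K$; so $\dim(\Es_m+\Es_n) \le K$, and in fact $\max_{m,n}\dim(\Es_m+\Es_n) \le K$ with $K$ as defined in the statement, matching the quantity $K$ appearing in Theorem~\ref{thm:CCULS}.

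Second, I would verify that $\Es_m+\Es_n$ satisfies the $K$-RFRP for every pair $m,n$. By Definition~\ref{dfn:generic}, this means every $K\times\dim(\Es_m+\Es_n)$ submatrix of a basis of $\Es_m+\Es_n$ must have full column rank; taking the basis to be $\bm D_{\Ic_m\cup\Ic_n}$ (or a maximal linearly independent subset of its columns if $\Ic_m\cup\Ic_n$ indexes dependent columns — though the $K\times k$-RFRP rules this out once $|\Ic_m\cup\Ic_n|\le K$), any such submatrix is a $K\times k$ submatrix of $\bm D$ with $k = |\Ic_m\cup\Ic_n| \le K$, hence has rank $k$ by the $K\times k$-RFRP hypothesis. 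This is exactly condition (i) of Theorem~\ref{thm:CCULS}. Condition (ii), $N \ge MK$, is assumed directly. Condition (iii), $\bm x_\kappa \ne \bm x_\lambda$ for $\kappa\ne\lambda$, is the assumption that the $M$ signals are distinct.

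Third, with all hypotheses of Theorem~\ref{thm:CCULS} in place, observing the shuffled channels $\{\bm y_m\}$ means there exist binary diagonal matrices $\{\bm Q_{mn}\}$ with $\bm I = \sum_m \bm Q_{mn} = \sum_n \bm Q_{mn}$ realizing relation~\eqref{eq:M-channel-US} between the true $\bm x_m$'s and any candidate solution $\bm x_m''\in\Es_m$; the theorem then forces $\bm x_m = \bm x_{n_m}''$ for a permutation $\{n_m\}$. Finally I would translate this back to the coefficient vectors: since $\bm x_m'' = \bm D\bm\beta_m''$ and $\bm x_m$ determines $\bm\beta_m$ uniquely (the support lies in $\Ic_{n_m}$, which has size $\le K \le 2K$, and the $K\times k$-RFRP gives $\krank$ large enough for unique sparse representation on each support), we conclude $\bm\beta_m = \bm\beta_{n_m}''$, i.e., the coefficient vectors are recovered up to renaming the channels.

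The main subtlety — really the only place care is needed — is the bookkeeping around whether $\Ic_m\cup\Ic_n$ might index a linearly dependent column set and hence whether $\dim(\Es_m+\Es_n)$ truly equals $|\Ic_m\cup\Ic_n|$ rather than something smaller; but the $K\times k$-RFRP with $k=|\Ic_m\cup\Ic_n|\le K$ guarantees full column rank of $\bm D_{\Ic_m\cup\Ic_n}$, so the dimension count is exact and the identification of $K$ in the corollary with $K$ in the theorem is clean. Everything else is a direct substitution into Theorem~\ref{thm:CCULS}, so the proof is short; the work is entirely in setting up the subspaces and checking the RFRP transfers correctly from the dictionary-level hypothesis to the subspace-level hypothesis.
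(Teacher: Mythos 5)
Your proof follows essentially the same route as the paper's: take $\Es_m$ to be the span of the columns of $\bm D$ indexed by $\Ic_m$, use the $K\times k$-RFRP of $\bm D$ to show that $\bm D_{\Ic_m\cup\Ic_n}$ is a full-column-rank basis of $\Es_m+\Es_n$ that itself satisfies the $K$-RFRP, and then invoke Theorem~\ref{thm:CCULS}. Your closing step passing from uniqueness of the $\bm x_m$'s back to uniqueness of the $\bm\beta_m$'s is a detail the paper leaves implicit, but the argument is otherwise identical.
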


\begin{proof}
{Each $\bm x_m$ lives in the subspace $\Es_m$ spanned by the columns of $\bm D$ indexed by $\Ic_m$; denote this column-submatrix of $\bm D$ by $\bm D^{\Ic_m}$. By construction, we have that $\dim(\Es_m+\Es_n) \le |\Ic_m\cup\Ic_n|:=K_{mn} \le K$. As $\bm D$ satisfies the \hyperref[dfn:krfrp]{$K \times k$-RFRP}, every $K \times K_{mn}$ submatrix of $\bm D$ has full column-rank. In particular, each $K \times K_{mn}$ submatrix of $\bm D^{\Ic_m \cup \Ic_n}$ has full column-rank. This means that i) $\bm D^{\Ic_m \cup \Ic_n}$ is a basis for the subspace $\Es_m+\Es_n$, and ii) $\Es_m+\Es_n$ satisfies the $K$-RFRP.} The result now follows directly from Theorem~\ref{thm:CCULS}.
\end{proof}

Corollaries~\ref{corr:1} and \ref{corr:2} establish that unique signal recovery from \( \bm y_m \) is achievable up to a permutation of the channel order under two conditions: (1) sufficient samples uniquely parameterize \( \bm x_m \) with respect to a specific set of columns from \(\bm D \), and (2) {the columns that the $\bm x_m$'s are supported on span subspaces that \emph{pairwise} satisfy the \hyperref[dfn:generic]{RFRP}}. These results suggest two algorithmic approaches: The first involves summing channel signals to identify a common subspace, followed by applying the cross-channel unlabeled sensing method from \cite{Koka2024} with a modified support recovery step for compressed sensing. This approach is examined in Section~\ref{sec:simul} using real data. However, summing signals can increase noise and subspace dimension, potentially requiring more measurements.
Alternatively, a joint approach could simultaneously address support recovery and signal estimation from \( \bm y_m \), potentially reducing sample size requirements, especially for large \( M \). This approach will be explored in future work.
\begin{figure*}
\centering
\includegraphics[width=0.98\linewidth]{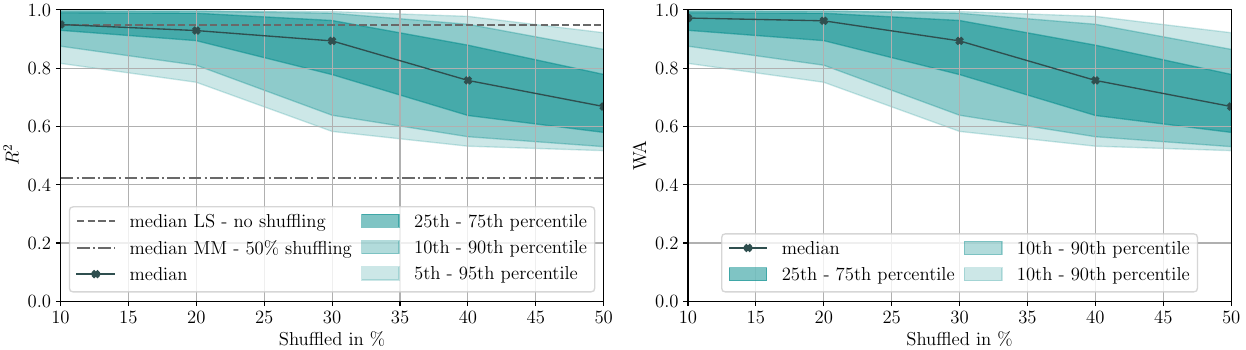}\\
    \caption{(Left) Performance of the method from \cite{Koka2024} in terms of $R^2$. Median performance of least squares (LS) estimator for no shuffling and MM-estimator for $50\%$ shuffling are shown as dashed and dashed-dotted lines. (Right) Performance of the method from \cite{Koka2024} in terms of $\mathrm{WA}$.}
    \label{fig:result}
\end{figure*}
\begin{figure}
\centering
\includegraphics[width=\linewidth]{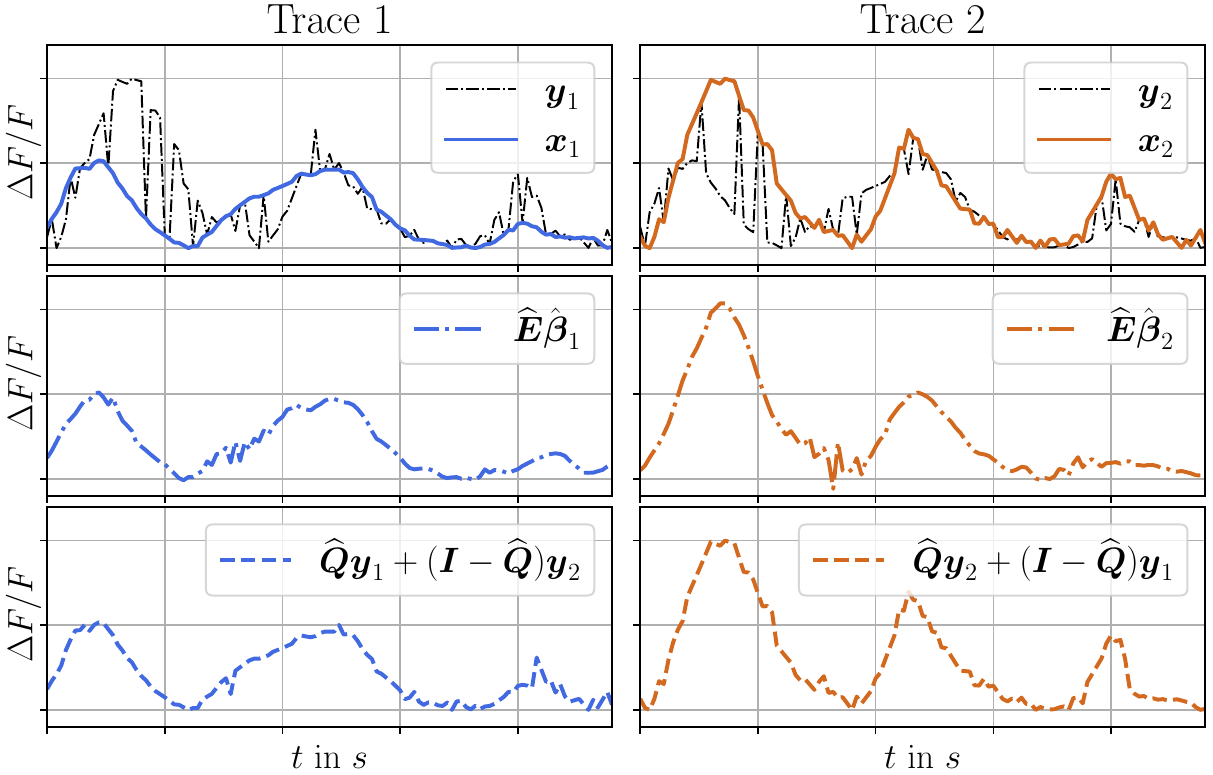}\\
    \caption{(Top) Original and shuffled traces $\bm x_1,\bm x_2$ and $\bm y_1,\bm y_2$, respectively. Amount of shuffled samples is $35\%$. (Middle) Reconstructed signals ($R^2 = 0.917$). (Bottom) Unshuffled measurements ($\mathrm{WA}=0.936$).}
    \label{fig:trace}
\end{figure}

\section{Numerical Simulations}
\label{sec:simul}
In our experiments, we evaluated the cross-channel unlabeled sensing framework using real calcium imaging traces from \emph{Drosophila} obtained from \cite{Streit2016}. Whole-brain calcium imaging captures neural activity by monitoring changes in calcium ion concentrations, which is particularly useful for studying neural circuits in freely moving organisms. However, movement can cause mismatches in sample-neuron assignments, complicating signal reconstruction and further analysis.

As a basic preprocessing, a baseline correction using asymmetric least squares smoothing as applied to the traces \cite{eilers2005baseline} (Fig.~\ref{fig:atoms_dictionary} left). Then, half of the available traces were used to learn a dictionary via convolutional sparse coding \cite{Grosse2007} and discarded for all further simulations. In particular, a single kernel $\bm{d}_1$ was learned, and the dictionary $\bm{D}$ was constructed as a circulant matrix of $\bm{d}_1$ (Fig.~\ref{fig:atoms_dictionary} right). The \hyperref[dfn:krfrp]{$K\times K$-RFRP} of $\bm D$ was numerically verified by checking the rank of $100\,000$ randomly sampled $K\times K$ submatrices for every $K\leq100$.

Sample-neuron mismatches were simulated by randomly shuffling pairs of $121$-sample ($\approx24$ s) long excerpts from different cells (denoted by $\bm x_1, \bm x_2$), resulting in the shuffled traces $\bm y_1, \bm y_2$. For the signal reconstruction and sample assignment task, the method from \cite{Koka2024}, with some modifications for our setup, was adopted:

\paragraph*{Sensing Matrix Estimation} The shuffled signals are summed as $\bm{y}_\Sigma = \bm{y}_1 + \bm{y}_2$, and used for the estimation of a sensing matrix $\bm{E}$. In particular, this is achieved by selecting columns from $\bm{D}$ using the least absolute shrinkage and selection operator (LASSO) \cite{tibshirani96}. The sparsity level was determined via stability selection \cite{Meinshausen2010,Shah2012}, which evaluates the stability of feature selection across subsamples and includes those with selection probabilities above a threshold. Here, we fix the selection threshold to $\alpha = 0.7$.

\paragraph*{Robust Estimation and Reassignment} Using the estimate $\widehat{\bm{E}}$ from the previous step, a robust MM-estimator of regression~\cite{yohai_high_1987} is applied to the matrix $\bm{I} \otimes \widehat{\bm{E}}$ with response vector $(\bm{y}_1^\top, \bm{y}_2^\top)^\top$ to estimate the coefficients $(\widehat{\bm{\beta}}_1^\top, \widehat{\bm{\beta}}_2^\top)^\top$. Samples are then reassigned between $\bm y_1$ and $\bm y_2$ to best fit the estimated models $\widehat{\bm{E}}\hat{\bm{\beta}}_1$ and $\widehat{\bm{E}}\hat{\bm{\beta}}_2$. This process is iterated for a predetermined number of times (in practice $5$ iterations are sufficient), with the best solution chosen based on the smallest residual sum of squares. An example of the results for a single run with $35\%$ shuffled samples is shown in Fig.~\ref{fig:trace}.

The performance was evaluated in terms of the $R^2$ value and a weighted accuracy (WA) metric tailored to measure sample assignment accuracy. The weighted accuracy is defined as:

\[
\mathrm{WA} = \frac{\sum_{\ell' \in \mathcal{L}} |x_{1\ell'} - x_{2\ell'}|}{\sum_{\ell=0}^{N-1} |x_{1\ell} - x_{2\ell}|},
\]

where $\mathcal{L}$ is the set of correctly assigned indices, weighted by the absolute deviations between $\bm{x}_1$ and $\bm{x}_2$. Given the ambiguity in channel order, the better of the two possible assignments was used. The performance metrics evaluated over $1000$ Monte Carlo runs are shown in Fig.~\ref{fig:result}, demonstrating the framework’s effectiveness in reconstructing shuffled calcium imaging signals. It can be observed that in terms of $R^2$, the estimates perform close to the least squares fit on the original trace, when up to $30\%$ of the samples are shuffled. Above $30\%$ shuffling there is a sharp decline in the performance, however even for $50\%$ shuffled samples we see a significant gain of reassigning the samples as compared to the robust MM-estimator. In terms of $\mathrm{WA}$, the performance shows a similar behavior.

\section{Conclusion}
\label{sec:conc}
In this work, we extended the framework of cross-channel unlabeled sensing to accommodate signals residing in distinct subspaces, thereby allowing for more complex signal structures. Our theoretical contributions include establishing conditions for unique signal recovery and showing that sparse signal models are specific instances of the framework. We validated the practical utility of the cross-channel unlabeled sensing framework for sparse signal structures through experiments on whole-brain calcium imaging data from \emph{Drosophila}, showcasing its effectiveness in reconstructing shuffled signals under realistic conditions. The results highlight the framework's potential to handle real-world scenarios with imprecise sample-channel associations, making it a valuable tool for applications in neuroscience and beyond.
\bibliographystyle{IEEEbib}
\bibliography{bibliography.bib}

\begin{thebibliography}{10}

\bibitem{unnikrishnan_unlabeled_2018}
Jayakrishnan Unnikrishnan, Saeid Haghighatshoar, and Martin Vetterli,
\newblock ``Unlabeled {Sensing} {With} {Random} {Linear} {Measurements},''
\newblock {\em IEEE Transactions on Information Theory}, vol. 64, no. 5, pp. 3237--3253, May 2018.

\bibitem{dokmanic_permutations_2019}
Ivan Dokmani{\'c},
\newblock ``Permutations {Unlabeled} {Beyond} {Sampling} {Unknown},''
\newblock {\em IEEE Signal Processing Letters}, vol. 26, no. 6, pp. 823--827, June 2019.

\bibitem{tsakiris_algebraic-geometric_2020}
Manolis~C. Tsakiris, Liangzu Peng, Aldo Conca, Laurent Kneip, Yuanming Shi, and Hayoung Choi,
\newblock ``An {Algebraic}-{Geometric} {Approach} for {Linear} {Regression} {Without} {Correspondences},''
\newblock {\em IEEE Transactions on Information Theory}, vol. 66, no. 8, pp. 5130--5144, Aug. 2020.

\bibitem{slawski_pseudo-likelihood_2021}
Martin Slawski, Guoqing Diao, and Emanuel Ben-David,
\newblock ``A {Pseudo}-{Likelihood} {Approach} to {Linear} {Regression} {With} {Partially} {Shuffled} {Data},''
\newblock {\em Journal of Computational and Graphical Statistics}, Mar. 2021.

\bibitem{tsakiris_homomorphic_2019}
Manolis Tsakiris and Liangzu Peng,
\newblock ``Homomorphic {Sensing},''
\newblock in {\em Proceedings of the 36th {International} {Conference} on {Machine} {Learning}}, May 2019, pp. 6335--6344.

\bibitem{ashwin_pananjady_linear_2018}
{Ashwin Pananjady}, Martin~J. Wainwright, and Thomas~A. Courtade,
\newblock ``Linear {Regression} {With} {Shuffled} {Data}: {Statistical} and {Computational} {Limits} of {Permutation} {Recovery},''
\newblock {\em IEEE Transactions on Information Theory}, vol. 64, no. 5, pp. 3286--3300, May 2018.

\bibitem{hsu_linear_2017}
Daniel~J Hsu, Kevin Shi, and Xiaorui Sun,
\newblock ``Linear regression without correspondence,''
\newblock in {\em Advances in {Neural} {Information} {Processing} {Systems}}. 2017, vol.~30, Curran Associates, Inc.

\bibitem{abid_stochastic_2018}
Abubakar Abid and James Zou,
\newblock ``A {Stochastic} {Expectation}-{Maximization} {Approach} to {Shuffled} {Linear} {Regression},''
\newblock in {\em 56th {Annual} {Allerton} {Conference} on {Communication}, {Control}, and {Computing} ({Allerton})}, Oct. 2018, pp. 470--477.

\bibitem{peng_linear_2020}
Liangzu Peng and Manolis~C. Tsakiris,
\newblock ``Linear {Regression} {Without} {Correspondences} via {Concave} {Minimization},''
\newblock {\em IEEE Signal Processing Letters}, vol. 27, pp. 1580--1584, 2020.

\bibitem{elhami_unlabeled_2017}
Golnoosh Elhami, Adam Scholefield, Benjamin Bejar~Haro, and Martin Vetterli,
\newblock ``Unlabeled sensing: Reconstruction algorithm and theoretical guarantees,''
\newblock in {\em 2017 IEEE International Conference on Acoustics, Speech and Signal Processing (ICASSP)}. Mar. 2017, IEEE.

\bibitem{peng_homomorphic_2021}
Liangzu Peng, Boshi Wang, and Manolis Tsakiris,
\newblock ``Homomorphic {Sensing}: {Sparsity} and {Noise},''
\newblock in {\em Proceedings of the 38th {International} {Conference} on {Machine} {Learning}}, July 2021, pp. 8464--8475.

\bibitem{xu_uncoupled_2019}
Liyuan Xu, Junya Honda, Gang Niu, and Masashi Sugiyama,
\newblock ``Uncoupled {Regression} from {Pairwise} {Comparison} {Data},''
\newblock {\em Advances in Neural Information Processing Systems}, vol. 32, 2019.

\bibitem{abbasi_r-local_2021}
Ahmed~Ali Abbasi, Abiy Tasissa, and Shuchin Aeron,
\newblock ``R-{Local} {Unlabeled} {Sensing}: {A} {Novel} {Graph} {Matching} {Approach} for {Multiview} {Unlabeled} {Sensing} {Under} {Local} {Permutations},''
\newblock {\em IEEE Open Journal of Signal Processing}, vol. 2, pp. 309--317, 2021.

\bibitem{yao_unlabeled_2021}
Yunzhen Yao, Liangzu Peng, and Manolis Tsakiris,
\newblock ``Unlabeled {Principal} {Component} {Analysis},''
\newblock in {\em Advances in {Neural} {Information} {Processing} {Systems}}. 2021, vol.~34, pp. 30452--30464, Curran Associates, Inc.

\bibitem{wang_estimation_2020}
Zhenbang Wang, Emanuel Ben-David, and Martin Slawski,
\newblock ``Estimation in exponential family {Regression} based on linked data contaminated by mismatch error,''
\newblock {\em arXiv preprint, arXiv:2010.00181}, Oct. 2020.

\bibitem{slawski_linear_2019}
Martin Slawski and Emanuel Ben-David,
\newblock ``Linear regression with sparsely permuted data,''
\newblock {\em Electronic Journal of Statistics}, vol. 13, no. 1, pp. 1--36, Jan. 2019.

\bibitem{abid_linear_2017}
Abubakar Abid, Ada Poon, and James Zou,
\newblock ``Linear {Regression} with {Shuffled} {Labels},''
\newblock {\em arXiv preprint, arXiv:1705.01342}, May 2017.

\bibitem{TSAKIRIS2023210}
Manolis~C. Tsakiris,
\newblock ``Determinantal conditions for homomorphic sensing,''
\newblock {\em Linear Algebra and its Applications}, vol. 656, pp. 210--223, 2023.

\bibitem{onaran2022shuffled}
Efe Onaran and Soledad Villar,
\newblock ``Shuffled linear regression through graduated convex relaxation,'' 2022.

\bibitem{abbasi_r-local_2022}
Ahmed~Ali Abbasi, Abiy Tasissa, and Shuchin Aeron,
\newblock ``r-local unlabeled sensing: Improved algorithm and applications,''
\newblock in {\em ICASSP 2022 - 2022 IEEE International Conference on Acoustics, Speech and Signal Processing (ICASSP)}, 2022, pp. 5593--5597.

\bibitem{Li2022}
Feiran Li, Kent Fujiwara, Fumio Okura, and Yasuyuki Matsushita,
\newblock ``Shuffled linear regression with outliers in both covariates and responses,''
\newblock {\em International Journal of Computer Vision}, vol. 131, no. 3, pp. 732–751, Dec. 2022.

\bibitem{Koka2024}
Taulant Koka, Manolis~C. Tsakiris, Michael Muma, and Benjamín Béjar~Haro,
\newblock ``Shuffled multi-channel sparse signal recovery,''
\newblock {\em Signal Processing}, vol. 224, pp. 109579, Nov. 2024.

\bibitem{Donoho2006}
D.L. Donoho,
\newblock ``Compressed sensing,''
\newblock {\em IEEE Transactions on Information Theory}, vol. 52, no. 4, pp. 1289–1306, Apr. 2006.

\bibitem{Golbabaee2012}
Mohammad Golbabaee and Pierre Vandergheynst,
\newblock ``Hyperspectral image compressed sensing via low-rank and joint-sparse matrix recovery,''
\newblock in {\em 2012 IEEE International Conference on Acoustics, Speech and Signal Processing (ICASSP)}. Mar. 2012, IEEE.

\bibitem{candes-fernandez-2012}
Emmanuel~J. Cand{\`e}s and Carlos Fernandez-Granda,
\newblock ``Towards a {Mathematical} {Theory} of {Super}-resolution,''
\newblock {\em Communications on Pure and Applied Mathematics}, vol. 67, no. 6, pp. 906--956, 2014.

\bibitem{tang-etal-2013}
Gongguo Tang, Badri~Narayan Bhaskar, Parikshit Shah, and Benjamin Recht,
\newblock ``Compressed {Sensing} {Off} the {Grid},''
\newblock {\em IEEE Transactions on Information Theory}, vol. 59, no. 11, pp. 7465--7490, 2013.

\bibitem{Roman}
Steven Roman,
\newblock {\em Advanced {Linear} {Algebra}},
\newblock Springer, 2008.

\bibitem{Streit2016}
Anne~K. Streit, Yuen~Ngan Fan, Laura Masullo, and Richard~A. Baines,
\newblock ``Calcium imaging of neuronal activity in drosophila can identify anticonvulsive compounds,''
\newblock {\em PLOS ONE}, vol. 11, no. 2, pp. e0148461, Feb. 2016,
\newblock [Online]. Available: \url{{https://dataverse.harvard.edu/dataverse/streit_2016_plosone}}.

\bibitem{eilers2005baseline}
Paul~HC Eilers and Hans~FM Boelens,
\newblock ``Baseline correction with asymmetric least squares smoothing,''
\newblock {\em Leiden University Medical Centre Report}, vol. 1, no. 1, pp. 5, 2005.

\bibitem{Grosse2007}
Roger Grosse, Rajat Raina, Helen Kwong, and Andrew~Y. Ng,
\newblock ``Shift-invariant sparse coding for audio classification,''
\newblock in {\em Proceedings of the Twenty-Third Conference on Uncertainty in Artificial Intelligence}, Arlington, Virginia, USA, 2007, UAI'07, p. 149–158, AUAI Press.

\bibitem{tibshirani96}
Robert Tibshirani,
\newblock ``Regression shrinkage and selection via the lasso,''
\newblock {\em Journal of the Royal Statistical Society. Series B (Methodological)}, vol. 58, no. 1, pp. 267--288, 1996.

\bibitem{Meinshausen2010}
Nicolai Meinshausen and Peter B\"{u}hlmann,
\newblock ``Stability selection,''
\newblock {\em Journal of the Royal Statistical Society Series B: Statistical Methodology}, vol. 72, no. 4, pp. 417–473, Aug. 2010.

\bibitem{Shah2012}
Rajen~D. Shah and Richard~J. Samworth,
\newblock ``Variable selection with error control: Another look at stability selection,''
\newblock {\em Journal of the Royal Statistical Society Series B: Statistical Methodology}, vol. 75, no. 1, pp. 55–80, June 2012.

\bibitem{yohai_high_1987}
Victor~J. Yohai,
\newblock ``High {Breakdown}-{Point} and {High} {Efficiency} {Robust} {Estimates} for {Regression},''
\newblock {\em The Annals of Statistics}, vol. 15, no. 2, pp. 642--656, June 1987.

\end{thebibliography}

\end{document}